\newtheorem{corollary}{Corollary}
\begin{document}

\begin{center}{\Large \textbf{
Fast counting with tensor networks
}}\end{center}

\begin{center}
S.~Kourtis\textsuperscript{1*},
C.~Chamon\textsuperscript{1},
E.~R.~Mucciolo\textsuperscript{2},
A.~E.~Ruckenstein\textsuperscript{1}
\end{center}

\begin{center}
{\bf 1} Physics Department, Boston University, Boston, Massachusetts 02215, USA
\\
{\bf 2} Department of Physics, University of Central Florida, Orlando, Florida 32816, USA
\\

* kourtis@bu.edu
\end{center}

\begin{center}
\today
\end{center}


\section*{Abstract}
{\bf
We introduce tensor network contraction algorithms for counting satisfying assignments of constraint satisfaction problems (\#CSPs). We represent each arbitrary \#CSP formula as a tensor network, whose full contraction yields the number of satisfying assignments of that formula, and use graph theoretical methods to determine favorable orders of contraction. We employ our heuristics for the solution of \#P-hard counting boolean satisfiability (\#SAT) problems, namely monotone \#\textsc{1-in-3SAT} and \#\textsc{Cubic-Vertex-Cover}, and find that they outperform state-of-the-art solvers by a significant margin.
}

\vspace{10pt}
\noindent\rule{\textwidth}{1pt}
\tableofcontents\thispagestyle{fancy}
\noindent\rule{\textwidth}{1pt}
\vspace{10pt}

\section{Introduction}

Constraint satisfaction problems (CSPs) have both a prominent function in the fundamental understanding of computational complexity and a vast reach in applications across diverse fields of science. The classification of computations in terms of complexity implies the existence of efficient algorithms for problems classified as ``tractable''. In contrast, many CSPs are classified as ``hard'': it is expected that the amount of computational effort it takes for \textit{any} algorithm to solve an instance of the problem scales superpolynomially with the size of the instance. This has led to decades of incremental refinement of algorithmic tools that can treat them at an acceptable computational cost~\cite{Fomin2010}. A large number of these hard problems are of great practical importance. New algorithms for faster solution of a CSP can thus have far reaching consequences.

The search for such algorithms has led to confluences of computer science and computational many-body physics, with notable examples like the FKT algorithm for counting the perfect matchings of planar graphs~\cite{Kasteleyn1961,Temperley1961,Kasteleyn1963} and survey propagation for the solution of boolean satisfiability problems~\cite{Mezard2002,Braunstein2002a,Braunstein2005a}. In the study of computational problems, one is often interested in obtaining exact solutions. This is particularly pertinent to solution of problems that cannot be approximated to a desired accuracy with any substantial advantage in computational effort compared to solving the problem exactly. On the other hand, the majority of numerical many-body techniques trade the exact solution of a problem, which typically requires computation times that scale exponentially with the problem size, for an advantageous polynomial-time procedure that yields adequate approximations. 

In this work, we follow the opposite route: we take a highly efficient numerical tool from many-body physics, namely tensor networks, and trade the benefit of polynomial scaling for the capability of obtaining the exact solution of hard computational problems, focusing in particular on counting satisfying assignments of boolean satisfiability (SAT) instances. In the original context of condensed matter physics, a tensor network is a representation of the classical or quantum partition function of a many-body system. In this language, the evaluation of the partition function amounts to taking the trace over all tensor indices. This representation formed the basis of powerful techniques for solving challenging problems of strongly correlated systems~\cite{Verstraete2008a,Orus2014}. Tensor networks have since proliferated beyond interacting particle systems~\cite{Biamonte2017,Cichocki2016,Cichocki2017a}.

The problem of taking the tensor trace of an arbitrary tensor network belongs to the complexity class \#P-hard~\cite{Damm2002}. So far, this has been circumvented in mainly two ways. The first is to impose restrictions on network structure, tensor entries, or both, and exploit these to devise subexponential-time exact contraction algorithms~\cite{Shi2006,Cai2007,Valiant2008,Bravyi2008,Aguado2008,Konig2009,Cai2009,Denny2012}. The second and so far most fruitful approach is to find an efficient scheme for accurate numerical approximation of the tensor trace. The idea is to perform local operations that \textit{compress} the crucial information into subsets of the dimensions of a tensor, and truncate the remainder. Since in condensed matter physics one typically deals with systems defined on a lattice instead of an arbitrary graph, it is natural to establish a coarse-graining procedure inspired by the renormalization group apparatus~\cite{Levin2007a,Jiang2008,Gu2009,Xie2012,Evenbly2015,Zhao2016,Evenbly2017,Bal2017,Yang2017a}. The resulting methods approximate the tensor trace in time that scales polynomially with the number of degrees of freedom. Recently, coarse-graining algorithms that trade the polynomial scaling for \textit{exact} computation of the tensor trace were introduced and successfully applied to the study of vertex models encoding computational problems~\cite{Yang2017}.

CSPs are commonly defined on random graphs. Even though some CSPs --- in particular, decision (SAT) and counting (\#SAT) problems~\cite{Garcia-Saez2011,Biamonte2015,Duenas-Osorio2018} --- have been formulated as tensor networks, no practical strategies for their efficient contraction exist. This is partly because it is nontrivial to define coarse-graining protocols in arbitrary graphs, but also because there is frequently no obviously advantageous order of contraction. Even though problems defined on arbitrary graphs can be embedded into lattices, this embedding often incurs large overheads in terms of ancillary degrees of freedom, which can in turn translate to undesirable computational penalties.

Here we introduce tensor network contraction methods for problems defined on \textit{arbitrary} graphs. Our main goal is to devise strategies for finding favorable orders of contraction. Unlike most existing approaches, we forgo completely any compression scheme, approximate or exact. Using tools from graph theory, we develop contraction algorithms based on graph partitioning that are provably subexponential for certain classes of graphs with upper bounded maximum degree. We then extend the scope to broader classes of graphs and focus on random regular graphs as an example. We apply our algorithms to two \#P-complete problems, namely, monotone \#\textsc{1-in-3SAT} and \#\textsc{Cubic-Vertex-Cover}, two \#P-complete \#SAT problems defined on random 3-regular graphs, and show that our heuristics indeed yield favorable contraction sequences that enable us to outperform state-of-the-art \#SAT solvers by a large margin.

We begin the presentation with a description of the tensor network contraction problem in Sec.~\ref{sec:tensorsat}. In Sec.~\ref{sec:graph} we cast the problem into a graph theoretic language. We use this vocabulary to prove that full contraction of tensors defined on graphs with sublinear separators can be performed in subexponential time, using graph partitioning as a tool. In Sec.~\ref{sec:algorithms} we detail numerical heuristics for full contraction of arbitrary networks of tensors. In Sec.~\ref{sec:numerics} we test our algorithms on \#\textsc{Cubic-Vertex-Cover} and show that it compares favorably to modern \#SAT solvers. We conclude with a summary and outlook in Sec.~\ref{sec:summary}

\section{Definitions and preliminary observations}

\subsection{Tensor networks and boolean satisfiability}\label{sec:tensorsat}

Consider a set of $N$ variables, denoted as $\{ \bm{i} \} = \{ i_1, i_2, \dots, i_N \}$, each of which is defined over a finite and discrete domain with at most $D$ elements. Variables are assumed to interact with one another. Interactions between a subset $\{ \bm{s} \}  \subseteq \{ \bm{i} \}$ of variables can be expressed by a multivariate function $E_{ \{ \bm{s} \} }(\bm{i})$, where the subscript indicates that the function depends only on the entries of the configuration vector (or ``state'') $\bm{i}$ that correspond to the variables in $\{ \bm{s} \}$. $E_{ \{ \bm{s} \} }(\bm{i})$ can be thought of as an energy cost penalizing ``less compatible'' variable configurations. One can then define the totality of interactions as
\begin{equation}
 E(\bm{i}) = \sum_{\{ \bm{s} \}} E_{ \{ \bm{s} \} }(\bm{i}) \,.
\end{equation}
In statistical mechanics, one defines the partition function as
\begin{equation}
 Z = \sum_{\bm{i}} \exp[ -\beta E(\bm{i})] = \sum_{\bm{i}} \prod_{ \{ \bm{s} \} } T_{ \{ \bm{s} \} } \,, \label{eq:z}
\end{equation}
where $\beta$ is the inverse temperature. The objects $T_{ \{ \bm{s} \} } = \exp( E_{ \{ \bm{s} \} } )$ are also multivariate functions of discrete variables --- or \textit{tensors} --- and they fully encode the underlying interactions or constraints between variables.

With a definition that ensures $E(\bm{i}) \geq 0 \ \forall \ \bm{i}$, the Boltzmann factors $\exp[ -\beta E(\bm{i})]$ express the occurrence probability of a configuration $\bm{i}$ of ``energy'' $E(\bm{i})$ relative to the probability of a zero-energy state. When $\beta\to\infty$, $Z$ simply counts the number of zero-energy states and its calculation amounts to solving a combinatorial problem defined by the constraints imposed by $E_{ \{ \bm{s} \} }$, or equivalently by $T_{ \{ \bm{s} \} }$. Note that a naive exhaustive search for these states over all possible configurations requires an exponential number of operations.

This formulation allows for directly casting a variety of decision and counting CSPs as tensor networks. The main goal is to perform the summation on the right-hand side of Eq.~\eqref{eq:z}, which is referred to as \textit{tensor trace} or \textit{full contraction} and yields the solution of the underlying instance. In general, tensor entries can take either continuous (e.g., $\mathbb{R}$, $\mathbb{C}$) or discrete (e.g., $\mathbb{Z}$, $\mathbb{N}$, $\mathbb{B}$) values. For simplicity and concreteness, below we focus on counting satisfying assignments in boolean satisfiability (SAT) problems. Tensor entries will hence be nonnegative integers. We stress, however, that all results are straightforwardly generalizable to CSPs beyond the boolean domain.

A SAT problem is the problem of deciding whether a logic formula built from a set of boolean variables $\{ \bm{x} \} = \lbrace x_1, x_2, \dots, x_n \rbrace$ and the operators $\wedge$ (conjunction), $\vee$ (disjunction), and $\neg$ (negation) evaluates to \textsc{true}, i.e., is \textit{satisfiable}. In the so-called \textit{conjunctive normal form}, variables and negations thereof --- collectively called \textit{literals} --- are combined in \textit{clauses}, i.e., disjunctions of literals, $m$ of which are in turn combined conjunctively to form the SAT formula. The general SAT problem defined in this manner, as well as many of its special cases, is NP-complete. In particular, when clauses are restricted to contain exactly $k$ variables each, the corresponding problem is called $k$SAT and is also NP-complete for all $k \geq 3$. The corresponding counting problems (\#SAT) --- determining how many satisfying assignments, if any, a SAT formula has --- are at least as hard as their decision counterparts, and belong to the class \#P-complete. In fact, \#$k$SAT is \#P-complete for any $k \geq 2$.

\begin{figure}[t]
\centering
\includegraphics[width=0.40\columnwidth]{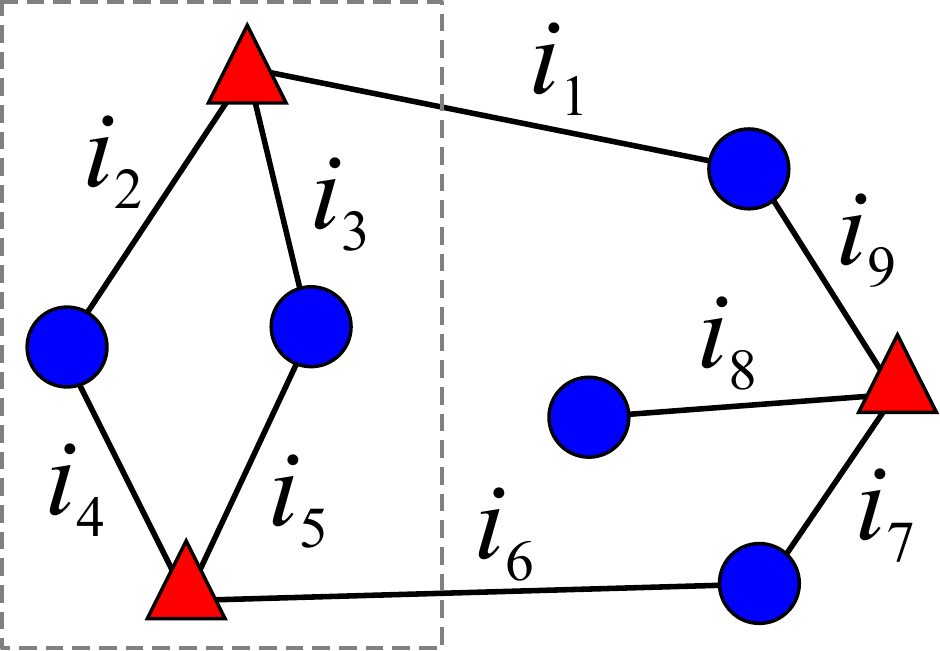}
\caption{Representation of a CSP instance as a graph. Circles (triangles) represent variables (clauses). The dashed rectangle delineates the tensor contraction example discussed in the text.}
\label{fig:constraint_graph}
\end{figure}

Instances of (\#)SAT problems can be straightforwardly represented as tensor networks~\cite{Garcia-Saez2011,Biamonte2015,Duenas-Osorio2018}. Each variable and clause is encoded into a tensor $T_{i_1 i_2 \dots i_d}$, where $d$ is the rank of the tensor and all indices are boolean. Clause tensors reflect the underlying boolean operations on variables. For example, OR tensors are of the form
\begin{equation}
 T^{\mathrm{OR}}_{i_1 i_2 \dots i_d} = \begin{cases}
                          0, & \mathrm{if} \ i_1 = i_2 = \dots = i_d = 0 \\
                          1, & \mathrm{otherwise}
                         \end{cases}
\,, \label{eq:or}
\end{equation}
where each index labels a bond and represents a variable appearing in the clause. If a variable is to appear negated in a clause, the values of the corresponding boolean index of the clause tensor are reversed. Reversed indices are overlined, e.g., $T_{i_1 \overline{i_2} i_3}$. Since variables can appear in more than one clauses, we need to be able to ``replicate'' the same index across clause tensors. This is achieved with COPY tensors of the form
\begin{equation}
 T^{\mathrm{COPY}}_{i_1 i_2 \dots i_d} = \begin{cases}
                          1, & \mathrm{if} \ i_1 = i_2 = \dots = i_d \\
                          0, & \mathrm{otherwise}
                         \end{cases}
\,, \label{eq:var}
\end{equation}
which indeed just ``copies'' the value of a variable across all indices.

With these definitions, tensor entries reveal how many assignments --- if any --- of the participating variables satisfies the underlying clause. For example, $T^{\mathrm{OR}}_{000}=0$ means that an assignment in which all variables participating in a 3-variable OR clause are 0 is unsatisfiable, whereas $T^{\mathrm{COPY}}_{001}=0$ means that errors in copying a variable are disallowed.

Common indices of two tensors can be ``traced over'', which results in the \textit{contraction} of the two tensors into one. A tensor obtained from the contraction of two or more tensors still encodes the number of satisfiable assignments for each combination of values of the remaining indices. Consider, for example, the contraction
\begin{equation}
 T_{i_1 i_6} = \sum_{i_2, i_3, i_4, i_5} T^{\mathrm{OR}}_{i_1 i_2 i_3} T^{\mathrm{COPY}}_{i_2 i_4} T^{\mathrm{COPY}}_{i_3 i_5} T^{\mathrm{OR}}_{i_4 i_5 i_6}
 = \begin{pmatrix}
    3 & 3 \\ 3 & 4
   \end{pmatrix}
\,,
\end{equation}
shown schematically in Fig.~\ref{fig:constraint_graph}. This simply
verifies that there are 4 satisfiable assignments of $i_2 = i_4$ and
$i_3 = i_5$ if $i_1 = i_6 = 1$ and only 3 otherwise, leading to
$T_{00}=T_{01}=T_{10}=3$ and $T_{11}=4$.

As discussed above, the full trace~\eqref{eq:z} of a tensor network yields the total number of configurations that satisfy all constraints. Therefore, if a tensor network encodes an instance of a SAT problem, its full contraction yields the solution of the corresponding \#SAT counting problem.

A tensor network contraction algorithm is a method that evaluates the tensor trace numerically, i.e., contracts all bonds sequentially until a single, zero-rank tensor --- a scalar --- is obtained. If we allow only for contractions of tensors, then the maximum tensor rank in the network increases as the number of tensors decreases. A linear increase in the maximum rank means an exponential increase in both the cost of subsequent contractions and the memory required to store the tensor network. This means that the \textit{order} of contractions can play a crucial role in the overall performance of a tensor network contraction algorithm.

The resource that defines the computation time is the tensor rank: a successful contraction strategy must keep the maximum tensor rank as low as possible throughout the contraction sequence. We note that the notion of \textit{bond dimension}, commonly discussed in the physics literature, is fully equivalent to the tensor rank in the present context, since we do not ever perform any decomposition or truncation of tensor entries. The bond dimension is therefore fully determined by the number of ``legs sticking out'' of each tensor, i.e., the \textit{degree} of the corresponding vertex in the network. As we detail in Sec.~\ref{sec:graph}, efficient contraction is essentially a problem in algorithmic graph theory. Using methods of graph partitioning and community detection to tackle this problem, in Sec.~\ref{sec:counting} we will introduce practical tensor network contraction algorithms that construct favorable contraction sequences.

\subsection{Graph theory and contraction complexity}~\label{sec:graph}

The tensor network contraction problem is fundamentally a \textit{graph modification} problem. Graph modification problems are an important aspect of algorithmic graph theory. Decision instances of such problems are typically stated as follows: given an input graph $G$, decide whether it is possible to obtain another graph $G'$ by performing a finite number of graph modification operations, such as vertex and edge additions or deletions, subject to some constraints. Problems of this kind typically tend to be NP-hard~\cite{Lewis1980,Yannakakis1981}. The corresponding optimization problems, i.e., finding a sequence of graph modification operations that leads from $G$ to $G'$ ensuring a minimum number of violated constraints, are at least as hard as their decision counterparts.

To define the problem, let $G=(V,E)$ denote a graph with vertex set $V$ and edge set $E = \{ uv : u,v \in V \}$, where we have denoted an edge connecting vertex $u$ to vertex $v$ as $uv$. The numbers of vertices and edges are written as $|V|$ and $|E|$, respectively. The \textit{degree} $\mathrm{deg} \: v_i $ of a vertex $v_i \in V$ is the number of vertices it is adjacent to, so that $\sum_{i=1}^{|V|} \mathrm{deg} \: v_i = 2|E|$. We also denote the \textit{minimum and maximum degrees} of $G$ as $\delta(G)$ and $\Delta(G)$, respectively. A graph is \textit{connected} if there is a path from any vertex to any other vertex. A graph is called $d$-\textit{regular} if $\mathrm{deg} \: v_i = d \ \forall \ v_i \in V$. A graph is called \textit{bipartite} if it can be divided into two components, such that each vertex belonging to one component is only connected to vertices belonging to the other. A graph is called \textit{planar} if it can be embedded in the plane, i.e., it can be drawn on a compact two-dimensional manifold of genus zero in a way that its edges intersect only at their endpoints.

\begin{figure*}[t]
\includegraphics[width=0.99\textwidth]{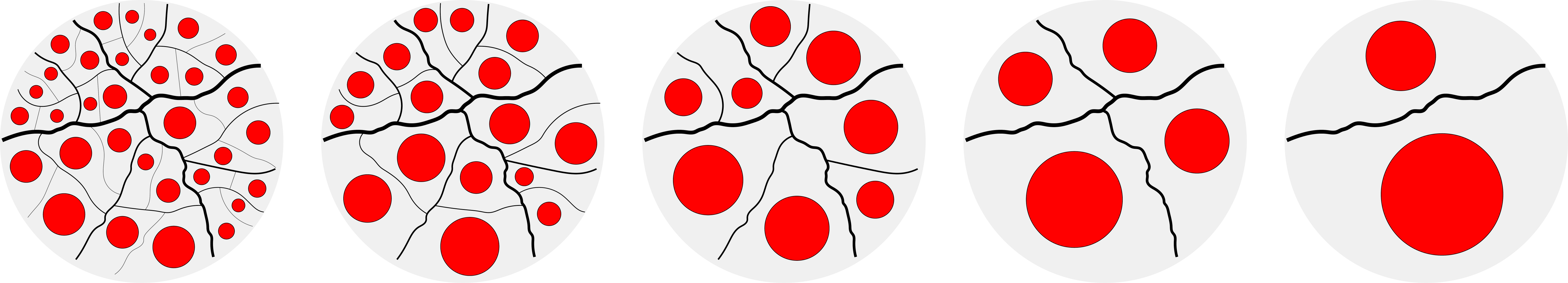}
\caption{Cartoon ``snapshots'' of a contraction sequence based on a separator hierarchy. Circles designate vertices of a graph and black lines are separators; edges are not shown. Circle size is proportional to vertex degree and thicker lines are separators higher in the hierarchy. Starting from the leftmost panel, edges of the shortest separators in the hierarchy are contracted, yielding the graph in the next panel.}
\label{fig:sep}
\end{figure*}

Below we will only consider the graph modification operation called \textit{edge contraction}. Edge contraction is an operation that removes an edge $uv$ from a graph and replaces the vertices $u$ and $v$ by a new vertex $w$, such that all edges previously incident upon $u$ and $v$ (apart from $uv$, which has been removed) are now incident upon $w$. Contracting one or more edges of a graph $G$ generates \textit{minors} of $G$. A minor of a graph $G$ is a graph that can be obtained from $G$ by any sequence of edge deletions, vertex deletions, and edge contractions. Below we implicitly assume that multiple edges between adjacent vertices are all contracted at once, so that minors obtained from edge contractions have no loops, i.e., there are no edges that connect a vertex to itself. \textit{Full contraction} is thus the operation of reducing a graph to a single vertex via repeated edge contractions.

With the above definitions, the problem that motivates this work can be stated as follows:
\begin{itemize}[leftmargin=1em]
 \item[] \textsc{Bounded Degree Full Contraction}
 \item[] \textit{Input:} A graph $G$ and an integer $D$.
 \item[] \textit{Question:} Is there an edge contraction sequence that reduces $G$ to a single vertex, ensuring that every minor $H$ of $G$ generated in the contraction sequence has maximum degree $\Delta(H) < D$?
\end{itemize}
Without any restriction on the input graph $G$, \textsc{Bounded Degree Full Contraction} can be shown to be NP-complete for any fixed $D \geq 2$~\cite{Belmonte2013}. The optimization problem of finding the contraction sequence that minimizes $D$ --- which is equivalent to finding the optimal contraction sequence for the evaluation of the tensor trace~\eqref{eq:z} --- is at least as hard as the decision problem \textsc{Bounded Degree Full Contraction}.

Our goal in this work is to show that we can obtain near-optimal contraction sequences despite the NP-hardness of \textsc{Bounded Degree Full Contraction}, and that this can have important practical consequences for the solution of constraint satisfaction problems using tensor networks. We first restrict the problem to planar graphs and derive a straightforward analytical result. This result then provides the impetus for the numerical heuristics applied to examples of \#P-complete problems in Sec.~\ref{sec:counting}.

It is well known that many NP-complete problems defined on graphs can be solved in subexponential time, when the graphs considered are planar. Here we will translate this into the language of tensor networks and \#P-complete problems. The general algorithm for the solution of a problem defined on a planar graph is to recursively partition the graph by finding \textit{vertex separators} and employing dynamic programming methods. A vertex separator is a subset $S \subset V$, such that if $S$ and all the edges incident to it were deleted, then $G$ would split into two disconnected \textit{induced subgraphs}. The planar separator theorem states that any planar graph on $|V|$ vertices has a separator $S$ that contains $O(|V|^{1/2})$ vertices and separates the graph into subgraphs $A$ and $B$, each of which has at most $2|V|/3$ vertices. Such planar separators can be found in linear time~\cite{Lipton1979}. Correspondingly, there is a set of edges $C = \lbrace uv \in E \ | \ u \in S, \, v \in A \rbrace$, called an \textit{edge separator} or \textit{cut-set}. If a graph has maximum degree $\Delta$, then $|C|$ is $O((\Delta |V|)^{1/2})$~\cite{Diks1993}. One can go one step further and construct \textit{separator hierarchies}, by recursively separating $A$ and $B$ as described above. Such hierarchies can be calculated in linear time~\cite{Goodrich1995}. This procedure eventually distributes the vertices of the original lattice into $O(|V|)$ subgraphs, each of size $O(1)$.

Two immediate corollaries of the planar separator theorem and its generalizations follow:
\begin{corollary}
Let $G$ be a planar graph with maximum degree $\Delta$. A sequence of edge contractions that fully contracts $G$, ensuring that the maximum degree of every minor $H$ of $G$ generated during the contraction sequence is at most $O((\Delta |V|)^{1/2})$, always exists and can be found in linear time. \label{coro:pgraph}
\end{corollary}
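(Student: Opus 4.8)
\emph{Proof strategy.} The plan is to convert the planar separator hierarchy of $G$ directly into a contraction sequence and then bound, level by level, the degree of every minor it produces. First I would invoke the cited linear-time construction of a full separator hierarchy: a rooted binary tree $\mathcal{T}$ whose root carries $G$, whose node $t$ at depth $j$ carries an induced subgraph $G_t$ on at most $(2/3)^j|V|$ vertices, and whose two children of $t$ arise by deleting a separator $S_t$ with $|S_t|=O(|V(G_t)|^{1/2})$ and associated cut-set $C_t$ with $|C_t|=O((\Delta\,|V(G_t)|)^{1/2})$, the recursion bottoming out at leaves of size $O(1)$; each $G_t$ is an induced subgraph of $G$, hence planar with maximum degree at most $\Delta$, so these bounds apply at every level. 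The contraction sequence is then read off by processing $\mathcal{T}$ from the leaves upward: at a leaf, contract its $O(1)$ vertices to a single vertex; at an internal node $t$ whose two children have already become single vertices, contract the remaining at most $|S_t|+2$ vertices of $G_t$ to a single vertex, in any order. Because $V$ sits at the root and $G$ is connected, this terminates with one vertex, and emitting the $|V|-1$ contractions this way, given $\mathcal{T}$, costs $O(|V|)$, so the sequence is found in linear time (performing it is of course still exponential).

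The heart of the matter is the degree estimate. At any stage the current blobs (vertices of the present minor) are unions of fully-collapsed subgraphs $G_t$ plus at most one partially-assembled one; for a given blob let $t_0$ be the shallowest node with blob $\subseteq V(G_{t_0})$. Its neighbours inside $V(G_{t_0})$ number at most $|S_{t_0}|+1$: either the blob is all of $V(G_{t_0})$ and has none, or $G_{t_0}$ is the subgraph currently being assembled, whose present vertices are its two child-blobs together with $S_{t_0}$. Every other neighbour lies outside $V(G_{t_0})$, and the edge reaching it crosses the cut-set $C_{t'}$ of some strict ancestor $t'$ of $t_0$ (take $t'$ the lowest common ancestor of the two endpoints' home subgraphs), so the number of such neighbours is at most $\sum_{t'\ \mathrm{anc.}\ t_0}|C_{t'}|$. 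Since $|V(G_{t'})|$ contracts by a factor at most $2/3$ per level, the $|C_{t'}|$ form a geometric series with ratio at most $(2/3)^{1/2}$, whose sum is $O((\Delta|V|)^{1/2})$, dominated by the root term; adding the $|S_{t_0}|+1=O(|V|^{1/2})$ internal neighbours keeps the blob's degree at $O((\Delta|V|)^{1/2})$. Before any contraction the minor is $G$ itself, whose degree $\Delta$ is also $O((\Delta|V|)^{1/2})$ because $\Delta\le|V|$; maximising over the whole sequence gives the claim.

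The step I expect to be most delicate is making this ancestor/frontier bookkeeping airtight rather than merely plausible. In particular: an induced subgraph $G_t$ in the hierarchy need not be connected, so ``collapse $G_t$ to one vertex'' must be read as ``collapse each connected component of the current $G_t$ to one vertex,'' the single final vertex being guaranteed only by connectedness of $G$ at the root; one must check that each external edge of a blob is charged to exactly one ancestor cut-set; and one must verify that the constants coming from the geometric series and from the cited $O((\Delta|V(G_t)|)^{1/2})$ cut bound compose into a genuine $O((\Delta|V|)^{1/2})$ with no stray $\log|V|$ factor. None of this is deep, but it is where a careless version of the argument would slip.
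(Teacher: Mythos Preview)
Your proposal is correct and follows the same strategy as the paper: build a separator hierarchy in linear time, contract bottom-up along it, and bound the degree of each intermediate blob by the cut-sets it straddles. Your geometric-series argument over ancestor cut-sets is in fact more careful than the paper's own one-paragraph sketch, which simply asserts that the edges incident to any intermediate vertex ``belong to $O(1)$ separators''; your version makes explicit why the $O(\log|V|)$ ancestor contributions still sum to $O((\Delta|V|)^{1/2})$, and you have correctly flagged the connectedness and charging subtleties that the paper leaves implicit.
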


\begin{proof}[\unskip\nopunct]
To prove this, find an edge separator hierarchy in linear time that partitions $G$ into $|V|$ subgraphs of size 1. Then calculate the tensor trace by recursively contracting all edges belonging to the \textit{smallest} separators in the hierarchy in at most $O(|V|)$ steps in total, as illustrated in the cartoon of Fig.~\ref{fig:sep}. At each step, the set of edges incident to a vertex belong to $O(1)$ separators, and hence they are at most $O((\Delta |V|)^{1/2})$.
\end{proof}
It is straightforward to generalize this corollary to other classes of graphs with sublinear separators, such as fixed-genus graphs~\cite{Gilbert1984}, or more generally graphs with polynomial expansion~\cite{Har-Peled2017}.

In the context of tensor networks, Corollary~\ref{coro:pgraph} in turn has the following immediate consequence:
\begin{corollary}
Let ${\cal T}_N$ be a planar tensor network of $N$ tensors with maximum rank upper bounded by a constant $\Delta$ and maximum dimension of a single index upper bounded by a constant $D$. Then ${\cal T}_N$ can be fully contracted in time $D^{O((\Delta N)^{1/2})}$. \label{coro:tncontraction}
\end{corollary}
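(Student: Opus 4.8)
The plan is to reduce the statement directly to Corollary~\ref{coro:pgraph} by identifying the tensor network ${\cal T}_N$ with its underlying graph $G=(V,E)$, in which each vertex represents a tensor and each edge represents a shared index. Since ${\cal T}_N$ is planar with $N$ tensors, $G$ is a planar graph with $|V|=N$; and since every tensor has rank at most $\Delta$, the maximum degree obeys $\Delta(G)\le\Delta$. An edge contraction in $G$ is precisely a tensor contraction over the index carried by that edge (parallel edges between the same pair of vertices correspond to simultaneously summing over several shared indices, and any self-loop produced by a contraction is a partial trace over a repeated index of a single tensor). A full-contraction sequence for $G$ is therefore a legitimate order in which to evaluate the tensor trace~\eqref{eq:z}; if $G$ is disconnected one contracts each component separately and multiplies the resulting scalars, which changes nothing below.

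First I would invoke Corollary~\ref{coro:pgraph} to produce, in linear time, an edge-contraction sequence that fully contracts $G$ while keeping the maximum degree of every intermediate minor $H$ bounded by $r := O((\Delta N)^{1/2})$. Every tensor that appears anywhere along this sequence is supported on the edges incident to a single vertex of some minor $H$, and hence has rank at most $r$; since each index has dimension at most $D$, any such tensor has at most $D^{r}$ entries and fits in that much memory.

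Next I would bound the cost of one contraction step: contracting two intermediate tensors of rank at most $r$ over their common indices costs at most the product of the dimensions of all indices involved, i.e.\ at most $D^{2r}$, and produces an output with at most $D^{r}$ entries. Each step therefore costs $D^{O(r)}=D^{O((\Delta N)^{1/2})}$, and there are at most $|V|-1=N-1$ steps, giving a total running time $N\cdot D^{O((\Delta N)^{1/2})}$. Since $\log N = O((\Delta N)^{1/2})$, the polynomial prefactor is absorbed into the exponent and the running time is $D^{O((\Delta N)^{1/2})}$, as claimed.

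The step I expect to require the most care is the bookkeeping in the first paragraph: making it precise that the graph-minor notion of ``maximum degree attained along the contraction sequence'' coincides exactly with ``maximum tensor rank encountered,'' including the correct treatment of parallel edges collapsing into a single index sum and of self-loops as partial traces, so that no intermediate tensor can secretly exceed rank $r$. Once that correspondence is pinned down, the remainder is the routine counting argument sketched above.
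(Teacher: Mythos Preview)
Your proposal is correct and follows essentially the same route as the paper: invoke Corollary~\ref{coro:pgraph} to bound the maximum degree along the contraction sequence by $O((\Delta N)^{1/2})$, then bound each pairwise tensor contraction by $D^{O((\Delta N)^{1/2})}$ and absorb the polynomial number of steps into the exponent. The paper's version is terser (it counts ``at most $\Delta N$ contractions'' and writes the per-step cost as $O(L^2)$ for tensors with $L$ elements), but the substance is identical; your extra care about parallel edges and self-loops is sound and does not change the argument.
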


\begin{proof}[\unskip\nopunct]
As in Corollary~\ref{coro:pgraph}, find an edge separator hierarchy in time $O(\Delta N)$. Full contraction of the tensor network then requires at most $\Delta N$ contractions of two tensors, each of size at most $O(D^{(\Delta N)^{1/2}})$. Contraction of two tensors, each with a total number of elements $L$, costs time at most $O(L^2)$, and hence the total cost of contraction scales as $D^{O((\Delta N)^{1/2})}$.
\end{proof}
This observation implies that instances of any \#CSP, defined on planar graphs with maximum degree $\Delta$ and with variables that can take at most $D$ values, can be solved with tensor networks in subexponential time \textit{irrespective of what the entries of the tensors are}. On the other hand, even stronger results can be obtained for tensor networks of planar or fixed-genus \#CSP instances if restrictions are imposed on tensor values~\cite{Cai2007,Valiant2008,Bravyi2008,Cai2009}. In physics terms, Corollary~\ref{coro:pgraph} may be intuitively understood as a form of ``area law'' for classical systems~\cite{Eisert2010a}. In complexity theory, it is a manifestation of the so-called ``square root phenomenon''.

Corollary~\ref{coro:tncontraction}, like Corollary~\ref{coro:pgraph}, also straightforwardly generalizes to classes of graphs with sublinear separators. In contrast, without any restriction on graph properties, the planar separator theorem and its generalizations do not hold and separators grow as $O(|V|)$. In this case, we do not have a meaningful worst-case bound that can guarantee the efficiency of tensor network contraction. We hence resort to numerical experimentation.

\section{Fast counting of solutions in \#P-complete problems}\label{sec:counting}

\subsection{Algorithms}\label{sec:algorithms}

\subsubsection{Tensor network contraction algorithms}

The algorithms used to prove the existence and the linear-time scaling for obtaining planar separators are not commonly implemented for graph partitioning in real-world problems. Furthermore, efficient partitioning of \textit{arbitrary} graphs is crucial in a variety of fields, from network load balancing and power grid design to social and biological networks~\cite{Buluc2016}, and powerful methods have been developed for this purpose.

Here we employ the multilevel graph partitioning heuristic called \texttt{METIS}~\cite{Karypis1998}. \texttt{METIS} performs one or more steps of coarse graining of the initial graph, in order to identify a favorable partition --- that is, a partition that splits the graph into two induced subgraphs of roughly equal size by cutting as few edges as possible. \texttt{METIS} begins the partitioning procedure from a randomly chosen vertex. This random choice causes a small percentage of partition separators to be atypically long compared to the average for a particular graph class and size. We find that performing each partition twice and choosing the best out of 2 almost completely eliminates this shortcoming for the classes of graphs we have studied, with only minimal cost in computation time. We therefore always obtain 2 bipartitions of the same (sub)graph and choose the best one. For further details on the inner workings of \texttt{METIS}, we refer the reader to the relevant references.

The combined \texttt{tensor-METIS} algorithm works as follows. We begin by defining the tensor network corresponding to an instance of a \#CSP problem, as detailed in Sec.~\ref{sec:tensorsat}. The first step of the algorithm is to construct a separator hierarchy of the network, like the one shown schematically in Fig.~\ref{fig:sep}, by recursively performing \texttt{METIS} bipartition until each partition contains a single vertex (and hence a single tensor). We then use the separator hierarchy to determine the tensor contraction sequence, along the lines of the procedure described in the proofs of Corollaries~\ref{coro:pgraph} and~\ref{coro:tncontraction}. Carrying out all contractions yields the tensor trace, which is the count of satisfying assignments of the instance.

To provide another point of reference, we also devise a greedy contraction algorithm. At each step of this protocol, out of all connected pairs of tensors in the network, we choose the pair whose contraction yields the tensor with the smallest rank and contract it, and repeat this until the network is fully contracted. We call this heuristic \texttt{tensor-greedy}.

Finally, as an alternative to partitioning, one may consider the \textit{community structure} of a network~\cite{Porter2009a,Fortunato2010} as a blueprint for efficient contraction sequences. Instead of separating a graph into a predetermined number of partitions, community detection algorithms automatically identify the number and membership of communities in a graph, such that vertices within a community are more highly connected than vertices across communities. Here we use the Girvan-Newman algorithm~\cite{Girvan2002} as a heuristic to determine the tensor contraction sequence. The Girvan-Newman algorithm is based on \textit{edge betweenness}, i.e., the number of shortest paths between pairs of vertices that contain an edge. The idea is that edges with high betweenness are more likely to connect vertices across communities, and thus recursively removing them eventually reveals the community structure in a graph. We use this community structure in the same way we use the separator hierarchies to contract arbitrary tensor networks. We designate this heuristic \texttt{tensor-GN}.

We have implemented the above tensor network contraction heuristics in Python and provide simple scripts demonstrating their usage on GitLab~\cite{tensorCSP}. We use the python version of the igraph library to construct and manipulate graphs. Graph partitioning is implemented using the METIS library~\cite{Karypis1998}. Finally, we use the igraph implementation of the Girvan-Newman algorithm in \texttt{tensor-GN}. All tensor contractions are performed using the library numpy. We note that all the tensor contraction algorithms are exponential-space in the general case, and memory is the main performance bottleneck.

\subsubsection{SAT counters used for benchmarking}

It is unlikely that any \#P-complete problem can be solved exactly in time that scales fundamentally better than exponentially~\cite{Impagliazzo1999}. Nevertheless, since solving \#P-complete problems is often important for practical purposes, any benefit in performance can have far-reaching consequences. For this reason, copious effort has been invested in the last few decades in the development of highly optimized heuristics for the solution of \#CSP problems~\cite{Fomin2010}, and in particular \#SAT.

We compare our tensor network contraction algorithms against the fastest existing solvers for the solution of a class of \#SAT problems, to be introduced below. Most solvers use the Davis--Putnam--Logemann--Loveland (DPLL) algorithm~\cite{Davis1962} to exhaustively search for all satisfying assignments of an instance, whereas others use ideas from knowledge compilation~\cite{Darwiche2006} to perform the counting. Specifically, we have tested the performance of \texttt{cachet}~\cite{Sang2005}, \texttt{cnf2eadt}~\cite{Koriche2013}, \texttt{CNF2OBDD}~\cite{Toda2016}, \texttt{d4}~\cite{Lagniez2017}, \texttt{miniC2D}~\cite{Oztok2015}, \texttt{relsat}~\cite{Jr2000}, and \texttt{sharpSAT}~\cite{Thurley2006}. For knowledge compilers, we have used --- whenever present --- appropriate options to skip the compilation step and solely do counting. For more details on these algorithms, we refer to the respective references.

For the problems we study in the next Section, we find that \texttt{miniC2D} exhibits the best scaling of all the solvers we tested, followed by \texttt{d4}. For this reason, we will use \texttt{miniC2D} as a performance benchmark in Sec.~\ref{sec:numerics}. It is interesting to note that both \texttt{miniC2D} and \texttt{d4} perform a form of graph partitioning (more accurately, hypergraph partitioning) in an initial preprocessing step, which may play a role in their superior performance compared to other solvers in our experiments.

\subsection{Numerical experiments}\label{sec:numerics}

We have benchmarked our tensor network contraction algorithms on the \#P-hard counting problems monotone \#\textsc{1-in-3SAT} and \#\textsc{Cubic-Vertex-Cover}. Average-case hard instances of both these problems can be defined as random 3-regular (or cubic) graphs. This has the advantage that hard instances can be sampled almost uniformly, thus eliminating bias in their random selection. First we detail our results on monotone \#\textsc{1-in-3SAT}. Our results on \#\textsc{Cubic-Vertex-Cover} are presented below.

Each monotone \textsc{1-in-3SAT} clause over three boolean variables returns \textsc{true} if precisely \emph{one} of the three variables is 1. The corresponding clause tensors have the form
\begin{equation}
 T^{\mathrm{1IN3}}_{i_1 i_2 i_3} = \begin{cases}
                          1, & \mathrm{if} \ i_1 + i_2 + i_3 = 1 \\
                          0, & \mathrm{otherwise}
                         \end{cases}
\,. \label{eq:1in3}
\end{equation}
Equivalently, a monotone \textsc{1-in-3SAT} clause over the variables $x_1, x_2, x_3$ can be written in conjunctive normal form as $(x_1 \vee x_2 \vee x_3) \wedge (x_1 \vee \neg x_2 \vee \neg x_3) \wedge (\neg x_1 \vee \neg x_2 \vee x_3) \wedge (\neg x_1 \vee x_2 \vee \neg x_3) \wedge (\neg x_1 \vee \neg x_2 \vee \neg x_3)$.

Monotone \textsc{1-in-3SAT} is NP-complete~\cite{Schaefer1978}. In fact, it represents the hardest type of satisfiability problems, as it features a ``shattered'' solution manifold close to the SAT-UNSAT threshold at clause-to-variable ratio $\alpha \simeq 2/3$. It is for this reason that this decision problem is used as a paradigmatic example to demonstrate the failure of all local search algorithms~\cite{Raymond2007,Zdeborova2008}. Here we focus on precisely the regime where the decision problem becomes hard in the average case, but we instead tackle the counting counterpart, i.e., monotone \#\textsc{1-in-3SAT}, which is even harder than the decision problem. We generate random instances of monotone \#\textsc{1-in-3SAT} at $\alpha \simeq 2/3$, in which each variable appears in precisely two clauses. These instances can be represented as 3-regular graphs, where each edge represents a variable and each vertex a clause. This allows us to efficiently sample them uniformly using established methods~\cite{Viger2005}.

\begin{figure}[t]
\centering
\includegraphics[width=0.60\columnwidth]{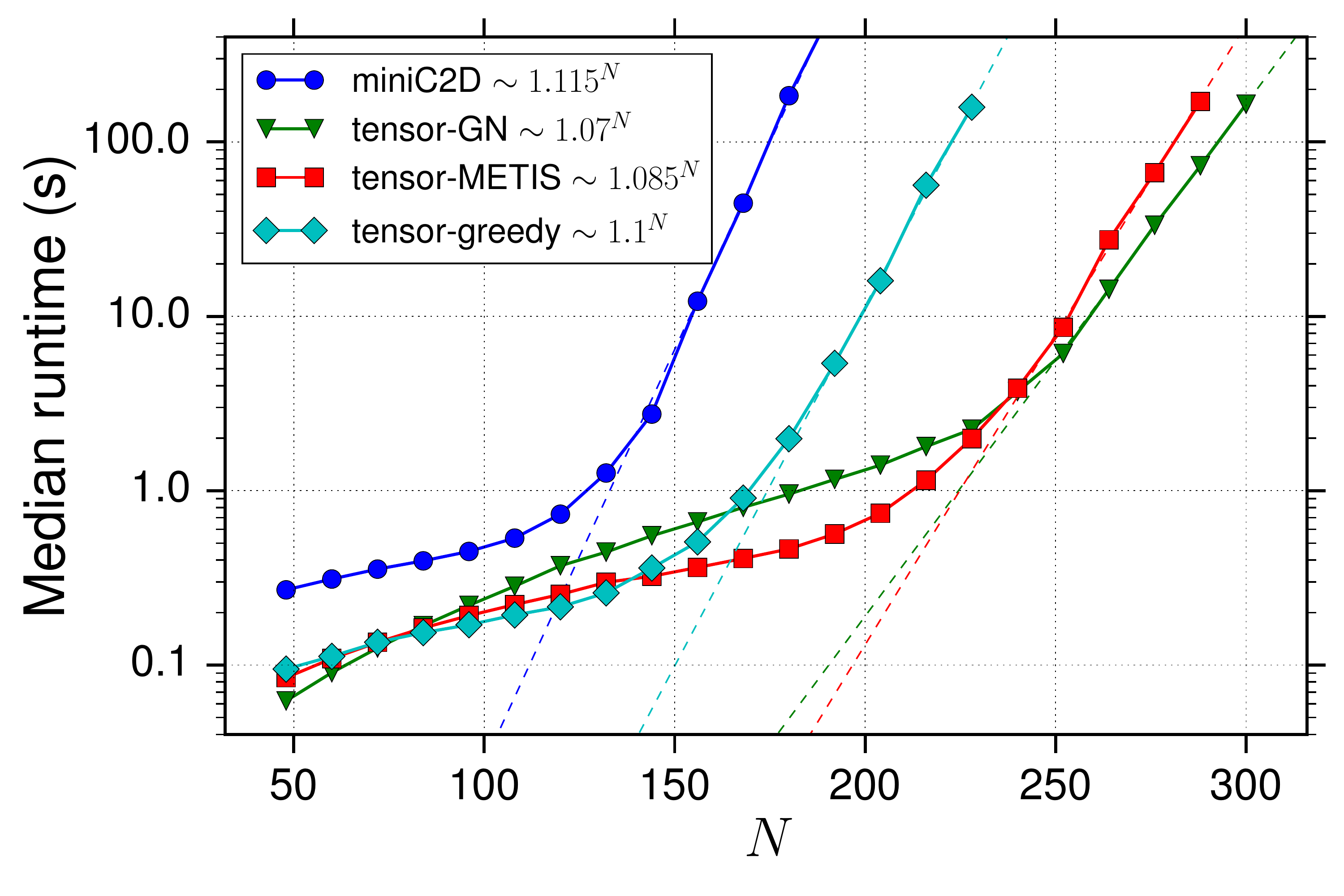}
\caption{Median runtime scaling of tensor network contraction solvers and \texttt{miniC2D} for the solution of monotone \#\textsc{1-in-3SAT} on cubic graphs. All algorithms were run on the same random sample of 1000 instances of the problem. Dashed lines indicate the scaling extrapolated from the last 4 data points for each method. All calculations were performed on a single core of an AMD Opteron 6320 2800MHz processor with 64GB of RAM available.}
\label{fig:1in3}
\end{figure}

In Fig.~\ref{fig:1in3} we compare the median runtime of \texttt{miniC2D} and our tensor methods as a function of the number of variables $N$ for 1000 random instances at each $N$. All our methods, including the naive greedy contraction algorithm, outperform \texttt{miniC2D}, achieving both better asymptotic scaling and absolute runtime. This exponential speedup allows us to reach instances of up to 300 variables with the graph partitioning and community structure heuristics, with the latter achieving a slightly better scaling than the former. This is well beyond what is achievable with all other known counting solvers. Our methods solve each individual instance in the sample faster than \texttt{miniC2D} in this regime. Note that, even though the median runtime is of the order of $10^2$~s for $N=300$, exceptional instances are much harder, requiring $\sim10^4$~s to solve.

\begin{figure}[t]
\centering
\includegraphics[width=0.60\columnwidth]{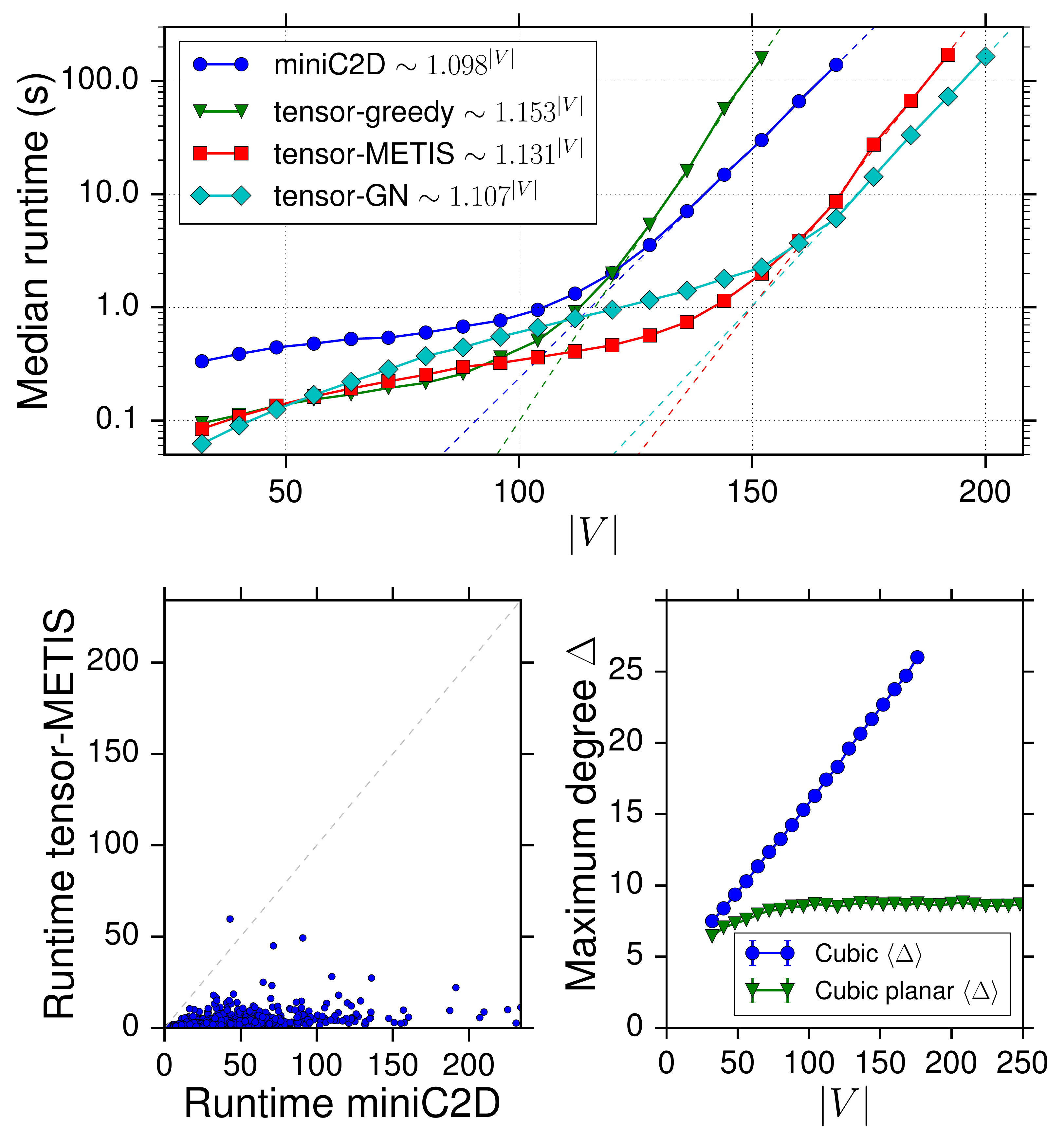}
\caption{(a) Median runtime of tensor network contraction solvers and \texttt{miniC2D} for the solution of \#\textsc{Cubic-Vertex-Cover}. All algorithms were run on the same random sample of 1000 instances of the problem. Dashed lines indicate the scaling extrapolated from the last 4 data points for each method. (b) Per instance runtime comparison between \texttt{tensor-METIS} and \texttt{miniC2D} for instances with 152 variables. Dashed gray line indicates equal runtime for the two algorithms. (c) Average maximum degree $\Delta$ encountered during the contraction sequence of \texttt{tensor-METIS} as a function of instance size for unrestricted (circles) vs planar (triangles) instances of \#\textsc{Cubic-Vertex-Cover}. All calculations were performed on a single core of an AMD Opteron 6320 2800MHz processor with 64GB of RAM available.}
\label{fig:3reg2sat}
\end{figure}

Next, we apply our tensor network algorithms to the problem \#\textsc{Cubic-Vertex-Cover}, i.e., counting the number of vertex covers of 3-regular graphs. A vertex cover is a subset of the vertices of the graph with the property that every edge of the graph has at least one of its endpoints in the subset. \#\textsc{Cubic-Vertex-Cover} is \#P-complete~\cite{Greenhill2000}. Equivalently, each instance of the problem can be defined as a \#2SAT formula, in which a variable represents a vertex (so that $N=|V|$) and a two-variable OR clause represents an edge. We remark that, even though 2SAT is in P, even sampling satisfying 2SAT assignments uniformly or almost uniformly is hard unless P$=$RP~\cite{Cardinal2018}, whereas the counting problem is \#P-complete.

We can recast \#\textsc{Cubic-Vertex-Cover} as a tensor network using Eqs.~\eqref{eq:or} and~\eqref{eq:var}. Here we sample 1000 cubic graphs with 32-200 vertices randomly using a Monte Carlo procedure that guarantees asymptotic sample uniformity~\cite{Viger2005}. All the graphs we generate are guaranteed to be connected and simple, i.e., there are no multiple edges between vertices. Our results are summarized in Fig.~\ref{fig:3reg2sat}. In Fig.~\ref{fig:3reg2sat}(a) we present a comparison of median times to solution $\langle \tau \rangle$ between our tensor network contraction heuristics and the \texttt{miniC2D} algorithm. We observe that the exponential scaling of all the methods is revealed when instances grow beyond 100 vertices. We find that \texttt{tensor-METIS} outperforms \texttt{miniC2D} by more than an order of magnitude in runtime for the largest instances accessible. Even though the asymptotic scaling is slightly worse than that of \texttt{miniC2D}, an extrapolation of the exponential scaling shows that \texttt{tensor-METIS} is faster for all practically accessible instances of this problem: the curves for \texttt{tensor-METIS} and \texttt{miniC2D} meet for $|V| > 500$ and $\tau > 10^{25}$s, i.e., runtimes several orders of magnitude larger than the current estimate for the age of the universe. \texttt{tensor-GN} performs even better. Finally, Fig.~\ref{fig:3reg2sat}(a) indicates that it is indeed the graph partitioning and community structure detection that are responsible for the good performance of the tensor algorithms, as it shows that \texttt{tensor-greedy} fails to achieve similar results.

Fig.~\ref{fig:3reg2sat}(b) shows a more detailed comparison between runtimes of \texttt{tensor-METIS} and \texttt{miniC2D} for instances with 152 variables. Of the 1000 random instances of \#\textsc{Cubic-Vertex-Cover}, only one is solved faster by \texttt{miniC2D}. Moreover, Fig.~\ref{fig:3reg2sat}(b) shows that the vast majority of instances of this size are solved within 10 seconds by \texttt{tensor-METIS}, whereas the distribution for \texttt{miniC2D} is much broader. Finally, in Fig.~\ref{fig:3reg2sat}(c) we show the growth of the average of the maximum degree encountered during the contraction sequence versus the number of variables, which reflects how the length of the separators constructed by \texttt{METIS} grows with $|V|$. The slope is clearly linear for general instances, indicating an exponential-time algorithm. As a comparison, we also show the maximum degree growth in solving planar instances of the same problem with \texttt{tensor-METIS}. This indicates that \texttt{tensor-METIS} is very effective in solving problems defined on planar graphs with degree constraints. We have observed the same behavior in general random planar graphs~\cite{Meichanetzidis2018}.

We remark that in our numerical experiments, both tensor algorithms and \texttt{miniC2D} perform counting using fixed-precision floating-point arithmetic. For \#\textsc{Cubic-Vertex-Cover}, rounding errors start occurring for $|V| \sim 100$ vertices, due to the fact that, for this problem, the number of solutions grows exponentially with $|V|$. This means that the counts we obtain from both methods are approximate. An exact count can be obtained by gradually incrementing the numerical precision upon increasing $|V|$. Increasing the representation size of numbers by a factor of $p$ means that the cost of a single multiplication of two numbers increases as $p^2$, and hence contraction of two tensors of total dimension $L$ now takes $O(p^2 L^2)$ multiplications. This means that extending the number representation can at most change the prefactor of the exponential scaling and not the exponent.

\section{Summary and outlook}\label{sec:summary}

In this work, we have reformulated tensor network contraction as an algorithmic graph theory problem. In this language, it becomes apparent that there is a complexity dichotomy for tensor network contraction depending on whether the underlying graph possesses sublinear separators. This renders graph partitioning a potentially useful tool for finding favorable contraction sequences of tensor networks. We have verified that this is indeed the case by implementing tensor network contraction algorithms that demonstrably outperform established \#SAT counters for instances of two \#P-complete problems.

The techniques we develop have a number of limitations, all of which can be at least partly addressed. Since all of the argumentation depends only on graph properties, it is agnostic as to the computational problem embedded into the tensors. We have already demonstrated this fact by studying two different classes of problems on cubic graphs, of which the first (monotone \#\textsc{1-in-3SAT}) is considered harder to solve in practice. For example, the algorithms we develop here would have identical performance on \#SAT and \#XORSAT defined on the same graphs, even though the latter is a \#-tractable problem. Another question pertains to graphs without upper bounded maximum degree. Vertex degrees correspond to tensor ranks, so if vertex degrees are allowed to grow linearly, then the size of the corresponding tensors grows exponentially. Nevertheless, it may be possible to relax the maximum degree condition in some cases. For SAT problems, one can break long clauses into two or more shorter clauses, incurring a polynomial overhead in ancillary variables, while tensors representing variables participating in many clauses can be exactly decomposed into rings of rank-3 tensors~\cite{Denny2012}. On the other hand, an increase in the average maximum degree is compensated by a decrease in the average clause-to-variable ratio. We thus expect that our algorithms should scale well for hard instances of \#\textsc{1-in}-$k$\textsc{SAT} with $k>3$ as well, as the SAT-UNSAT threshold decreases with increasing $k$ for these problems~\cite{Achlioptas2001}.

In the tensor network representation of CSP problems, each variable and each clause correspond to a graph vertex. As our tensor network contraction methods scale with the number of vertices, increasing the clause-to-variable ratio multiplies the number of vertices in the graph and hence worsens the scaling of our methods in comparison to exhaustive DPLL solvers. Our methods are hence most advantageous for problems that are the hardest when the resulting instance graphs are sparser. A counterexample is \#3\textsc{SAT}, which becomes hard for clause-to-variable ratios that are higher than those of the problems studied here. However, it is important to note that a higher threshold value of the clause-to-variable ratio does \emph{not} necessarily signify a more complex problem: close to its satisfiability threshold at $\alpha\simeq2/3$, monotone 1-\textsc{in}-3\textsc{SAT} is practically a much harder decision problem than 3\textsc{SAT} close to its threshold at $\alpha\simeq4.27$~\cite{Zdeborova2008}. In conclusion, even though our methods are applicable to a wide variety of CSPs, their potential is greatest for hard problems that give rise to sparser graphs, such as the aforementioned \#\textsc{1-in}-$k$\textsc{SAT} problems with $k>3$.

Even though we have focused on \#SAT problems, tensor network techniques can be used to solve CSPs beyond boolean satisfiability, including weighted satisfiability and CSPs with variables defined over higher dimensional --- and even mixed --- domains. We therefore believe this work may have important repercussions in a broad range of settings where fast solution of instances of \#P-hard \#CSP problems is required. For example, the evaluation of knot invariants can be thought of as weighted planar \#CSP problems~\cite{Jaeger1990}, for which our method of tensor network contraction is subexponential and demonstrably fast~\cite{Meichanetzidis2018}. Furthermore, our approach can be used to classically simulate measurements of the outputs of random quantum circuits. The graph $G$ is then the circuit grid, tensors represent quantum gates, and the values of the variables representing the initial and post-measurement states of the circuit qubits are fixed. In this setting, a separator-based algorithm is equivalent --- in terms of the functional form of the runtime scaling --- to the established simulation algorithms based on treewidth~\cite{Markov2008}, but may be advantageous in practice.

Another direction we believe holds promise is the generalization of tensor network compression techniques, developed for the coarse-graining of tensor lattices, to arbitrary graphs~\cite{Hauru2018}. This could lead to novel subexponential-time approximation algorithms for decision and counting problems of intermediate complexity~\cite{Ladner1975,Bordewich2011}. Of particular future interest in this context is also combinatorial optimization.

Both graph partitioning and community detection lend themselves to divide-and-conquer schemes. With an efficient distributed representation of tensor networks~\cite{Solomonik2014}, one could plausibly harness the potential of distributed computation over multiple machines or a cloud to solve very large instances of hard problems. Decision problems may be particularly amenable to such a strategy, if linear algebra of boolean tensors is implemented in a scalable manner.

Hybrid techniques that incorporate elements of more than one of the algorithms we introduced here may still improve performance. Tensor network contraction methods may hold even greater promise as members of algorithm portfolios, to be used in algorithm selection for artificial reasoning~\cite{Xu2008,Bischl2016}. Another appealing feature of tensor network contraction algorithms based on graph properties is that they can be ``lazy'', i.e., one can decide whether to evaluate the tensor trace after first contracting the underlying graph with a given method and tracking vertex degrees to accurately predetermine the computational cost of the tensor contraction --- see, e.g., Fig.~\ref{fig:3reg2sat}(c).

\section*{Acknowledgements}
We are grateful to K.~Meichanetzidis, J.~Reyes, Z.-C.~Yang, and L.~Zhang for many fruitful discussions. Parts of the computational work reported on in this paper was performed on the Shared Computing Cluster, which is administered by Boston University's Research Computing Services.

\paragraph{Funding information}
S.~K.~was partially supported through the Boston University Center for Non-Equilibrium Systems and Computation. C. C. was partially supported by DOE Grant No. DE-FG02-06ER46316. E.~R.~M. was partially supported by NSF Grant No. CCF-1525943.

\nolinenumbers

\end{document}